\DeclarePairedDelimiter\ceil{\lceil}{\rceil}
\newcommand{\bigo}[1]{\mathcal{O}(#1)}
\newcommand{\bigolr}[1]{\mathcal{O}\left(#1\right)}
\newcommand{\pushright}[1]{\ifmeasuring@#1\else\omit\hfill$\displaystyle#1$\fi\ignorespaces}
\begin{document}

\hyphenation{pro-blems}
\pagestyle{headings} 
\addtocmark{The Klee's Measure Kernel Problem}

\mainmatter       
\title{Multivariate analysis for \\
\textsc{Maxima} in High Dimensions}
\author{
 J\'er\'emy Barbay
 \and 
 Javiel Rojas
}

\institute{
 Departmento de Ciencias de la Computaci\'on, Universidad de Chile, Chile\\ 
 \email{\{jeremy, jrojas\}@dcc.uchile.cl}
}


\maketitle

\begin{abstract} 
We study the problem of computing the \textsc{Maxima}  of a set of $n$ $d$-dimensional points. For dimensions 2 and 3, there are algorithms to solve the problem with order-oblivious instance-optimal running time. However, in higher dimensions there is still room for improvements. We present an algorithm sensitive to the structural entropy of the input set, which improves the running time, for large classes of instances, on the best solution for \textsc{Maxima} to date for $d \ge 4$.
\end{abstract}

\section{Introduction}\label{sec:Introduction}
The problem of computing the \textsc{Maxima} of a set $\mathcal{S}$ of points was first formulated in 1974 by \textcite{Kung1975}: a point from $\mathcal{S}$ is called \emph{maximal} if none of the remaining vectors in $\mathcal{S}$ dominates it in every component, and the \textsc{Maxima} of $\mathcal{S}$ (denoted by $M(\mathcal{S})$) is the set of maximal points in $\mathcal{S}$.  
For any constant dimension $d$ this problem can be solved naively in time within $\bigo{n^2}$ by comparing every possible pair of points. \textcite{Kung1975} proposed two different algorithms to solve the problem in two and three dimensions, respectively, running in time within $\bigo{n \log n}$. For higher dimensions, \textcite{Kung1975} presented a divide-and-conquer algorithm running in time within $\bigo{n \log^{d-2} n}$ for dimensions $d>3$, and showed a lower bound of within $\Omega(n \log n)$ for this problem for any dimension $d>2$.

In 1985, \textcite{Kirkpatrick1985} gave the first output-size sensitive algorithm for this problem, running in time within $\bigo{n \log h}$ in the plane, and within  $\bigo{n \log^{d-2}{h}}$ for dimension $d>3$, where $h$ is the size of the \textsc{Maxima}. In 2 and 3 dimensions, this result remained unbeaten for almost 25 years until \textcite{Afshani2009} described an \emph{instance-optimal} algorithm for this problem (i.e., an algorithm whose cost is at most a constant factor from the cost of any other algorithm running on the same input, for every input instance). 

Given this improvement in 2 and 3 dimensions, one natural question is whether the same technique can be applied to higher dimensions in order to improve upon \citeauthor{Kirkpatrick1985}'s results~\cite{Kirkpatrick1985} in dimension $d \ge 4$. We show that the upper bound can indeed be applied to higher dimensions, even though the generalization of the order-oblivious instance-optimality result is still open due to the lack of advanced lower bound techniques in high dimensions. We introduce some basic definitions in \Cref{sec:preliminaries}; and describe our generalization of \citeauthor{Afshani2009}'s upper bound~\cite{Afshani2009} to high dimension in \Cref{sec:maxima}, before discussing the potential for instance-optimality results in high dimension in \Cref{sec:discussion}.

\section{Preliminaries}\label{sec:preliminaries}	
An algorithm is said to be \emph{instance-optimal} if its cost is at most a constant factor from the cost of any other algorithm on the same input, for every input instance.
As such algorithm does not always exist, it is often useful to consider a relaxation of this concept, order-oblivious instance-optimality: an algorithm is said to be  \emph{order-oblivious instance-optimal} if it is instance-optimal on the worst input order possible.

\textcite{Afshani2009} improved on the analysis of \textcite{Kirkpatrick1985} for the computation of the \textsc{Maxima} in dimensions 2 and 3 refining it to order-oblivious instance-optimal algorithms. 
In 2 dimensions, they showed that a simple variant of the output sensitive algorithm originally described by \textcite{Kirkpatrick1985} is, surprisingly, order-oblivious instance-optimal.
For the 3 dimensional case, they described a completely new algorithm and a proof of its order-oblivious instance-optimality.

To refine the analysis, \textcite{Afshani2009} introduced the notion of \emph{structural entropy} of a set  $\mathcal{S}$ of points, a measure of difficulty of the input instances of \textsc{Maxima}, and described two algorithms sensitive to this measure.
To define the structural entropy of a point-set, \textcite{Afshani2009} first introduced the concept of \emph{respectful partition}:

\begin{definition}[Respectful Partition]
	A partition $\Pi$ of a set $\mathcal{S}$ of points in  $\mathbb{R}^d$ into disjoint subsets $\mathcal{S}_1, \ldots,\mathcal{S}_t$ is said to be \emph{respectful} if each subset $\mathcal{S}_k$ is either a singleton or can be enclosed by a $d$-dimensional axis-aligned box $B_k$ whose interior is completely below the \textsc{Maxima} of $\mathcal{S}$. 
\end{definition}
Intuitively, the entropy of a respectful partition is the minimum number of bits required to encode it, and the structural entropy of a point-set is the entropy of a respectful partition with minimal entropy.
Formally, the structural entropy of a point-set is defined as follows:
\begin{definition}[Structural Entropy]
The entropy $\mathcal{H}(\Pi)$ of a partition $\Pi$  of $\mathcal{S}$ into disjoint subsets $\mathcal{S}_1, \ldots,\mathcal{S}_t$ is defined as the value
$\sum_{k=1}^{t}{(|\mathcal{S}_k |/n)\log(n/|\mathcal{S}_k |)}$.
The \emph{structural entropy} $\mathcal{H}(\mathcal{S})$ of the input set $\mathcal{S}$ is the minimum of $\mathcal{H}(\Pi)$ over all respectful partitions $\Pi$ of $\mathcal{S}$.
\end{definition}
%
%
\textcite{Afshani2009} showed that the \textsc{Maxima} of a point-set in dimensions 2 and 3, can be computed in time within $\bigo{n(\mathcal{H}(\mathcal{S}) + 1)}$, and proved a matching lower bound for any algorithm which ignores the input order.

In \Cref{sec:maxima} we generalize to higher dimensions \textcite{Afshani2009}'s upper bound on the computational complexity of \textsc{Maxima} in the worst case over instances of fixed size and structural entropy.
This yields an algorithm sensitive to the structural entropy of the input set, hence improving on the solution described by \textcite{Kirkpatrick1985} for $d \ge 4$. 



\section{Multivariate analysis for the computation of \textsc{Maxima}}\label{sec:maxima}
We present the algorithm \texttt{DPC-Maxima} (where \texttt{DPC} stands for ``Divide, Prune, and Conquer") which computes the \textsc{Maxima} of a $d$-dimensional set $\mathcal{S}$ of points with running time sensitive to the structural entropy of the input set. 

Similar to \citeauthor{Afshani2009}'s algorithm~\cite{Afshani2009}, \texttt{DPC-Maxima} performs several pruning steps removing from $\mathcal{S}$ points that are detected to be dominated, until the remaining set becomes the \textsc{Maxima} of the original set. 
For this, the algorithm repeatedly partitions $\mathcal{S}$ into an increasing number of subsets, and for each partition removes from  $\mathcal{S}$  the subsets that are contained in a box completely below the \textsc{Maxima} of $\mathcal{S}$.  
The following is an outline of \texttt{DPC-Maxima}:

\begin{algorithm}        
	\caption{\texttt{DPC-Maxima}}          
	\label{alg:maximaadpt}                           
	\begin{algorithmic}[1]
		\Require A set $S$ of $n$ points in $\mathbb{R}^d$
		\Ensure The \textsc{Maxima} of $\mathcal{S}$
		\State $j \gets 0$, $\mathcal{M}  \gets \{S\}$
		\While {there is a set in $\mathcal{M} $ with more than one point}\label{step:outerLoop}
		\State \label{step:init} $j \gets j+1$, $r_j \gets 2^{2^j}$, $\mathcal{M} \gets \{\}$
       	\State \label{step:partition}partition $S$ into $r_j$ subsets $S_1, S_2, \ldots, S_{r_j}$ of size at most $n/r_j$
		\For {$i=[1..r_j]$}\label{step:forFilter}
		\State Let $\Gamma_i$ be the minimum axis-aligned enclosing rectangle of $S_i$
		\State  \label{step:filter} \textbf{if} $\Gamma_i$ is below the \textsc{Maxima} of $S$ \textbf{then} remove all points in $S_i$ from $S$		\par 
		\hskip \algorithmicindent 	
		\textbf{else} add $S_i$ to $\mathcal{M} $
		\EndFor
		\EndWhile
		\State  return the points remaining in $S$
	\end{algorithmic}
\end{algorithm}

Note that \texttt{DPC-Maxima} always terminates: due to the restriction on the size of the subsets (in step~\ref{step:partition}), at most $\log\log n$ iterations of the outer loop will be performed for any set $S$ of $n$ points. Besides, the algorithm is correct: only dominated points are removed, and in the last iteration of the outer loop every subset is a singleton, so the points remaining in $\mathcal{S}$ after that iteration are precisely those in the \textsc{Maxima}  of $\mathcal{S}$.

In the rest of the section we analyze the running time of \texttt{DPC-Maxima}. Two main issues need to be addressed: how partitions are chosen in step~\ref{step:partition}, and how the filter steps~\ref{step:forFilter}-\ref{step:filter} are performed efficiently.
Partitions need to be chosen with care: if $B$ is a box in an optimal respectful partition of $\mathcal{S}$, the subsets considered by the algorithm that contain points within $B$ need to be small (compact) enough so they fall completely inside $B$ (and the subsets get pruned), but large enough that only a small number of subsets is needed to prune all the points within $B$. Besides, for a given partition, \texttt{DPC-Maxima} needs to check in small time whether the minimum axis-aligned enclosing rectangle of each subset in the partition is below the \textsc{Maxima} of $S$. 

\pagebreak

We address first the selection of the partition. We show in the following lemma that any point set $S$ can be partitioned into subsets so that the faces of any axis aligned box $B$ intersects a small number of them. In particular, if $B$ is a box in a respectful partition  of $S$, the lemma provides a bound on the number of points in $\mathcal{S} \cap B$ remaining after the filtering in steps~\ref{step:forFilter}-\ref{step:filter} of the algorithm \texttt{DPC-Maxima}.


\begin{lemma}\label{lemma:max_partition}
Let $\mathcal{S}$ be a set of $n$ points in $\mathbb{R}^{d}$, and $r$ be an integer in $[1..n]$.  There is a partition $ \Pi$ of $\mathcal{S}$ into $r$ subsets  such that each subset has at most $n/r$ points, and for any axis aligned $d$-dimensional box $B$:
\begin{enumerate}[i.]
	\item \label{item:subsetsIntersected} $B$ partially intersects (i.e. intersects but does not contain) at most $\bigo{r^{1-1/d}}$  subsets in $\Pi$
	\item \label{item:pointsUnfiltered} There are at most  $\min \{|\mathcal{S} \cap B|, \bigo{n/r^{1/d}}\}$ points within $B$ that belong to subsets in $\Pi$ not fully contained within $B$	
\end{enumerate}
Besides, this partition can be found in time within $\bigo{|\mathcal{S}| \log r}$.
\end{lemma}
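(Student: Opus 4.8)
The plan is to construct $\Pi$ as a balanced recursive grid obtained by slicing the point set along one coordinate at a time. Set $m=\lceil r^{1/d}\rceil$, so that $m^{d}$ lies within a constant factor of $r$. Using a linear-time selection routine, I would first split $\mathcal{S}$ into $m$ slabs of equal cardinality by $m-1$ hyperplanes perpendicular to the first axis, then recurse, splitting each slab into $m$ sub-slabs perpendicular to the second axis, and so on through all $d$ coordinates. After these $d$ levels of slicing the resulting cells form the subsets of $\Pi$: each cell holds at most $n/m^{d}\le n/r$ points, and their number is $\Theta(r)$. (Since both the target count and the two bounds below are stated asymptotically, having $\Theta(r)$ subsets in place of exactly $r$ is immaterial.)

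For part~\ref{item:subsetsIntersected}, observe that a cell $C$ is partially intersected by $B$ precisely when $C$ meets $B$ but is not contained in it; since each cell is connected, this forces the boundary $\partial B$ to cross $C$. As $\partial B$ is covered by the $2d$ hyperplanes supporting the faces of $B$, it suffices to bound, for a fixed axis-aligned hyperplane $H\colon x_k=c$, the number of cells that $H$ crosses. Here I would exploit the recursive structure: at the unique level where we sliced along axis $k$, the value $c$ falls between two consecutive cuts and hence $H$ meets at most two of the resulting sub-slabs, whereas at each of the other $d-1$ levels the cuts are perpendicular to a different axis, so $H$ meets every one of the $m$ sub-slabs produced. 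Multiplying across levels, $H$ crosses at most $2\,m^{d-1}$ cells, so $\partial B$ crosses $\bigo{d\,m^{d-1}}=\bigo{r^{1-1/d}}$ cells, which (as $d$ is constant) is the claimed bound.

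Part~\ref{item:pointsUnfiltered} then follows immediately. The points of $\mathcal{S}\cap B$ lying in subsets not fully contained in $B$ are exactly those lying in the partially intersecting cells, so their number is trivially at most $|\mathcal{S}\cap B|$; on the other hand, by part~\ref{item:subsetsIntersected} there are $\bigo{r^{1-1/d}}$ such cells, each containing at most $n/r$ points, giving the second bound $\bigo{r^{1-1/d}\cdot n/r}=\bigo{n/r^{1/d}}$, and the minimum of the two applies. For the construction time, splitting a set of $N$ points into $m$ equal parts along one coordinate costs $\bigo{N\log m}$ by recursively halving through median selection; since the total cardinality processed at each of the $d$ levels is $n$, summing the per-level cost of $\bigo{n\log m}$ yields $\bigo{d\,n\log m}=\bigo{n\log r}$ overall.

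The main obstacle is part~\ref{item:subsetsIntersected}: because the cells arise from a \emph{data-dependent}, median-based partition rather than a uniform grid, the cuts at deeper levels differ from slab to slab, and one must verify carefully that the per-level counting argument above remains valid and that a single hyperplane still meets only $\bigo{m^{d-1}}$ cells. Once this is in hand, part~\ref{item:pointsUnfiltered} and the running-time analysis are routine.
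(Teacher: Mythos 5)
Your proof is correct, and it reaches the lemma by a genuinely different route than the paper. The paper builds an ordinary binary $k$-$d$ tree with median splits, recursing until each leaf cell holds at most $n/r$ points (hence at most $\log r$ levels and at most $r$ leaves, padded with empty subsets to reach exactly $r$), and it does not prove the crossing bound at all: part~(i) is obtained by citing the known fact that an axis-aligned box partially intersects at most $\bigo{r^{1-1/d}}$ leaf cells of a $k$-$d$ tree with $\bigo{r}$ leaves, after which part~(ii) and the $\bigo{n \log r}$ construction time follow just as in your write-up. Your $d$-level, $m$-ary slab decomposition (with $m=\lceil r^{1/d}\rceil$, one level per coordinate) instead proves the stabbing bound from first principles; this costs a few extra lines but makes the lemma self-contained, at the price of producing $\Theta(r)$ (at most $2^{d}r$) subsets rather than exactly $r$, and of cells holding up to $\lceil n/r\rceil$ points because of integer rounding --- both harmless wherever the lemma is invoked, since $d$ is constant and the paper's own construction pads with empty subsets anyway. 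Finally, the worry in your closing paragraph is unfounded: your per-level counting is already valid for data-dependent cuts, because it is applied parent by parent. Within any cell crossed by $H\colon x_k=c$, the children created at the level that splits along $x_k$ tile that parent's $x_k$-extent, so $c$ lies in at most two of their closed $x_k$-intervals no matter where that parent's cuts fall; and at every other level the children inherit the parent's $x_k$-extent, so each crossed parent yields at most $m$ crossed children. Multiplying these per-parent factors gives the bound of $2m^{d-1}$ crossed cells per face hyperplane regardless of the cut positions, which is all your argument needs.
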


\begin{proof}
The partition $\Pi$ can be obtained building a $k$-$d$ tree~\cite{Bentley1975}, using the median to split at each level, and recursing until the number of points within each cell drops to $n/r$ or less.
Built like this, the $k$-$d$ tree will have at most $\log r$ levels and $r$ leaf cells, each containing at most $n/r$ points. 
For each leaf cell in the $k$-$d$ tree, the subset of points contained within it is added to $\Pi$.
If needed, empty subsets are added also to obtain the total $r$ subsets.
This way of proceeding requires running time within $\bigo{|\mathcal{S}|}$ for each level, for a total running time within $\bigo{|\mathcal{S}| \log r}$.

To prove proposition~(\ref{item:subsetsIntersected}), we use the fact that any axis aligned $d$-dimensional box intersects at most $\bigo{r^{1-1/d}}$ leaf cells in a $k$-$d$ tree with $\bigo{r}$ leaf cells~\cite{Bentley1975}.
Since the subsets in $\Pi$ fall within cells in a $k$-$d$ tree with at most $r$ leafs, the result follows. 
Finally, (\ref{item:pointsUnfiltered}) derives from (\ref{item:subsetsIntersected}) and from the fact that every subset has at most $n/r$ points: the points within $B$ belong, in the partition, to subsets that are completely contained within $B$, or partially intersecting $B$; the latest ones can be at most $\bigo{r^{1-1/d}}$ because of (\ref{item:subsetsIntersected}), and contain at most $n/r$ points each, for a total number of points within $\bigo{n/r^{1/d}}$, which completes the proof.
\qed
\end{proof}


Now, we turn the attention to the filtering process in steps~\ref{step:forFilter}-\ref{step:filter} of the algorithm \texttt{DPC-Maxima}.
For this, we introduce a data structure that answers dominance queries over a point set, and that benefits from knowing a priori the number of queries to perform. With it, the subsets in the partition that fall within boxes completely below the \textsc{Maxima} can be detected in small time.

\begin{lemma}\label{lemma:fast_cellDominated}
Given a set $S$ of $n$ points in $\mathbb{R}^{d}$, and an integer parameter $r \in [1..n]$, there is a data structure that answers $r$ dominance queries over $S$ (i.e. given a query point $q$, detecting whether exists a point $p \in \mathcal{S}$ that dominates $q$) in total time and space within $\bigo{n \log^{d-2}{r}}$.
\end{lemma}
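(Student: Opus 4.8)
The plan is to build a $d$-dimensional dominance-emptiness structure \emph{tuned to the number $r$ of queries}, so that its size, its construction time, and the total time for the $r$ queries are all within $\bigo{n\log^{d-2} r}$. First I would reformulate the test performed in the filtering step: a box is below the \textsc{Maxima} of $S$ exactly when its top corner (the corner maximal in every coordinate) is dominated by some point of $S$. Thus each filtering test is a dominance-emptiness query, i.e.\ given $q$ decide whether the orthant of points dominating $q$ meets $S$, and the $r_j$ boxes of one iteration give $r$ such queries.

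For the structure itself I would use a layered range tree on the coordinates $3,\ldots,d$ (that is, $d-2$ nested balanced trees), storing at the canonical subsets of the innermost layer a two-dimensional staircase, namely the \textsc{Maxima} of the projection onto coordinates $1$ and $2$. Since a planar dominance-emptiness query reduces to a single point location against a staircase in linear space, this two-dimensional base saves one logarithmic factor over a fully nested tree: a standard computation shows the whole structure occupies $\bigo{n\log^{d-2} n}$ space and answers one emptiness query by combining the outcomes of $\bigo{\log^{d-2} n}$ staircase point locations, which fractional cascading across the staircases reduces to $\bigo{\log^{d-2} n}$ time per query. Making the cascading remove the extra base-case logarithm cleanly is the main technical care needed in this step.

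To replace $\log n$ by $\log r$ I would exploit the a-priori knowledge of $r$: each of the $d-2$ trees is split by medians only down to depth $\lceil\log r\rceil$, that is until its cells hold at most $n/r$ points, and the secondary structures are materialized only there. Each point then lies in $\bigo{\log^{d-2} r}$ secondary structures, so (after one global pre-sorting of the coordinates, whose $\bigo{n\log n}$ cost is absorbed once by the algorithm) both the space and the construction time drop to $\bigo{n\log^{d-2} r}$. A dominance query then decomposes the query orthant into $\bigo{\log^{d-2} r}$ fully materialized canonical nodes, answered as above, together with the points lying in the straddling boundary cells of the $d-2$ trees.

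The delicate point, and the place where knowing $r$ pays off, is bounding these residual boundary points. Along the search path of one tree the straddling cells shrink geometrically from size $n/r$, hence contain $\bigo{n/r}$ points in total; and because at each of the $d-2$ layers the relevant canonical nodes are disjoint and so have sizes summing to at most $n$, the residual amounts to $\bigo{n/r}$ points per query over the entire structure. I would test these against $q$ directly in $\bigo{d}=\bigo{1}$ time each, for $\bigo{n/r}$ per query. Summing, the $r$ queries cost $\bigo{r\log^{d-2} r}$ for the materialized part and $\bigo{r\cdot n/r}=\bigo{n}$ for the residuals; since $r\le n$ this is $\bigo{n\log^{d-2} r}$, matching the space and construction bounds. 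The hard part is exactly this residual accounting: resolving a straddling cell by recursion would re-introduce a $\log n$ factor, so the argument must instead charge such cells to the global $\bigo{n/r}$ bound and dispatch them by brute force without recursing.
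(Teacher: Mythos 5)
Your overall architecture (a layered range tree on coordinates $3,\ldots,d$ truncated at depth $\lceil\log r\rceil$, with staircases plus fractional cascading at the innermost layer, and brute-force handling of the $\bigo{n/r}$ residual points per query) is internally consistent on the query side: the residual accounting via disjointness of canonical subsets is correct, and even the initial $\bigo{\log n}$ cascading search per query is affordable since $r\log n \le n\log r$ for $r \le n$. However, there is a genuine gap on the construction side, and you have flagged it yourself without resolving it: the global $\bigo{n\log n}$ pre-sort needed to build the sorted catalogs (the staircases are $2$-dimensional \textsc{Maxima}, and fractional cascading requires sorted lists; the innermost canonical subsets can have size up to $n$, so they cannot be sorted individually within budget). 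The lemma demands \emph{total} time within $\bigo{n\log^{d-2} r}$, and this bound is $\bigo{n}$ when $r$ is constant --- or, in the intended application, when $r_j = 2^{2^j}$ is small in the early iterations of \texttt{DPC-Maxima} --- so an $\bigo{n\log n}$ sort simply does not fit. Nor can it be ``absorbed once by the algorithm'': \Cref{theo:maximaHd_runtime} claims a bound that is $\Theta(n)$ for low-entropy instances (this linear-time behavior is precisely the improvement highlighted in the discussion section), so even a single global sort would invalidate the main result, not just this lemma.

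The paper avoids this trap with a different, much shorter argument: it takes the $\bigo{m \log^{d-2} m}$-preprocessing, $\bigo{\log^{d-2} m}$-query dominance structure of \textcite{Afshani2009} as a black box, partitions $S$ into $\ceil{n/r}$ groups of size at most $r$, builds one structure per group, and answers each query by querying all groups, giving $\bigo{\frac{n}{r}\log^{d-2}r}$ per query and $\bigo{n\log^{d-2}r}$ over the $r$ queries. The key point you are missing is that the grouping trick makes all sorting local: sorting \emph{within} a group of size $r$ costs $\bigo{r\log r}\subseteq\bigo{r\log^{d-2}r}$ for $d\ge 3$, so no global sort is ever needed. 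Your construction is salvageable by exactly this device --- apply your truncated tree to groups of size $r$ rather than to all of $S$ --- but at that point it coincides with the paper's proof, with your tree playing the role of the cited black box; as written, charging the sort to the caller is not a valid step.
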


\begin{proof}
\textcite{Afshani2009} introduced a simple data structure that answers dominance queries over a point-set of size $n$ in time within $\bigo{\log^{d-2} n}$, with preprocessing time within $\bigo{n \log^{d-2} n}$.
Combining this data structure with a grouping trick (e.g. such as described by \textcite{Chan95}) yields the result in the lemma: 
$S$ is partitioned into $\ceil{n/r}$ subsets $S_1,\ldots,S_{\ceil{n/r}}$ of size at most $r$, and each $S_i$ is processed into a data structure to answer dominance queries over $S_i$.  The total preprocessing time is within $\bigo{\frac{n}{r}(r \log ^{d-2} r)} = \bigo{n \log ^{d-2} r}$. 
To answer whether a point $q$ is dominated by any of the points in $S$, the data structures corresponding to each $S_i$ are queried, for $i \in [1..\ceil{n/r}]$: the final answer is ``yes" only if $q$ is dominated in any of the data structures. Thus, each query is answered in time within $\bigo{\frac{n}{r}\log ^{d-2} r}$, for a total time within  $\bigo{n \log^{d-2} {r}}$.
\qed
\end{proof}

Note that, to detect whether a subset of $\mathcal{S}$ falls within a box that is completely below the \textsc{Maxima} of $\mathcal{S}$ (as in step~\ref{step:filter} of \texttt{DPC-Maxima}), it is enough to query whether the corner with maximal coordinates (i.e. the ``upper, right,\ldots" corner) of the minimum axis-aligned rectangle of the subset is dominated by any point in $\mathcal{S}$. 
With this observation, and lemmas \ref{lemma:max_partition} and \ref{lemma:fast_cellDominated} we provide a first bound for the running time of \texttt{DPC-Maxima}:

\begin{lemma}\label{lemma:firstBound}
Let $\mathcal{S}$ be a set of points,  and $\sigma_j$ be the size of $\mathcal{S}$ right after iteration $j$ of the algorithm  \texttt{DPC-Maxima}. The running time of \texttt{DPC-Maxima} is within $\bigolr{\sum_{j=0}^{\log\log n} {\sigma_j \times 2^{j(d-2)}}}$.
\end{lemma}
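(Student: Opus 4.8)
The plan is to bound the cost of a single execution of the outer loop (\Cref{step:outerLoop}) and then sum over the at most $\log\log n$ iterations. I would fix an iteration $j \ge 1$ and let $\sigma_{j-1}$ denote the size of $\mathcal{S}$ at the moment this iteration begins (so $\sigma_0 = n$). Each iteration does two kinds of work: building the partition in \Cref{step:partition}, and running the filter in \Cref{step:forFilter}--\Cref{step:filter}. I would analyze each separately using the two preceding lemmas, and then observe which term dominates.

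First I would bound the partition cost. Applying \Cref{lemma:max_partition} to the current set $\mathcal{S}$ (of size $\sigma_{j-1}$) with parameter $r = r_j = 2^{2^j}$ produces the required partition into $r_j$ subsets of size at most $\sigma_{j-1}/r_j \le n/r_j$ in time within $\bigo{\sigma_{j-1}\log r_j}$. Since $\log r_j = 2^j$, this is $\bigo{\sigma_{j-1}\, 2^{j}}$. Next I would bound the filter cost. Computing the $r_j$ minimum enclosing rectangles costs $\bigo{\sigma_{j-1}}$ in total. By the observation preceding the lemma, deciding whether $\Gamma_i$ lies below the \textsc{Maxima} of $\mathcal{S}$ reduces to a single dominance query (on the maximal corner of $\Gamma_i$) against $\mathcal{S}$, and there are exactly $r_j$ such queries. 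Feeding $\mathcal{S}$ (again of size $\sigma_{j-1}$) and the query budget $r = r_j$ into \Cref{lemma:fast_cellDominated}, all $r_j$ queries, including the construction of the data structure, are answered in total time within $\bigo{\sigma_{j-1}\log^{d-2} r_j} = \bigo{\sigma_{j-1}\, 2^{j(d-2)}}$. For $d \ge 3$ this term dominates the partition cost, so the whole iteration runs in time within $\bigo{\sigma_{j-1}\, 2^{j(d-2)}}$.

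Finally I would sum and reindex. Summing over the at most $\log\log n$ iterations gives total time within $\bigo{\sum_{j=1}^{\log\log n}\sigma_{j-1}\, 2^{j(d-2)}}$. The only subtlety is the index shift between the pre-iteration size $\sigma_{j-1}$ that governs the cost and the post-iteration size $\sigma_j$ appearing in the statement; substituting $k = j-1$ turns each term into $\sigma_k\, 2^{(k+1)(d-2)} = 2^{d-2}\,\sigma_k\, 2^{k(d-2)}$, and since $d$ is a constant the factor $2^{d-2}$ is absorbed into the $\bigo{\cdot}$, yielding the claimed bound $\bigolr{\sum_{j=0}^{\log\log n}\sigma_j\, 2^{j(d-2)}}$.

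I expect this bookkeeping of the index shift to be the only real obstacle, together with one point that must be checked explicitly: the data structure of \Cref{lemma:fast_cellDominated} is rebuilt on the current pruned set $\mathcal{S}$ at each iteration, so that its construction-plus-query cost scales with $\sigma_{j-1}$ rather than with the original $n$. Once that is observed, the whole argument is a direct application of the two preceding lemmas together with the identity $\log r_j = 2^j$.
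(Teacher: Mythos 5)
Your proof is correct and takes essentially the same approach as the paper's: in each iteration, charge the partition to \Cref{lemma:max_partition} and the filtering to \Cref{lemma:fast_cellDominated} with parameter $r_j = 2^{2^j}$, then sum over the at most $\log\log n$ iterations. You are in fact slightly more careful than the paper, which states the per-iteration cost directly in terms of the post-iteration size $\sigma_j$ rather than the pre-iteration size $\sigma_{j-1}$ that actually governs it; your reindexing with $k = j-1$ (absorbing the constant factor $2^{d-2}$) and your explicit remark that the filtering cost dominates the partition cost only for $d \ge 3$ make those implicit steps rigorous.
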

\begin{proof}
Let $r_j=2^{2^j}$ (as defined in step \ref{step:init} of \texttt{DPC-Maxima}). Obtaining the partition in step~\ref{step:partition} using \Cref{lemma:max_partition} costs time within $\bigo{|\mathcal{S}| \log r_j}$. Besides,
the data structure described in \Cref{lemma:fast_cellDominated} allows to perform the filtering steps in lines \ref{step:forFilter}-\ref{step:filter} collectively in total time within $\bigo{\sigma_j \log^{d-2} r_j}$ (using $r_j$ as the parameter in the lemma).
Applying the definition of $r_j$  yields a total time within $\bigo{\sigma_j \times 2^{j(d-2)}}$. Since at most $\log\log n$ iterations of the outer loop are performed, the result follows.
\qed
\end{proof}
%
 Finally, we show that \texttt{DPC-Maxima} runs in time sensitive to the structural entropy of the input set in the following theorem:

\begin{theorem}\label{theo:maximaHd_runtime}
Let $\mathcal{S}$ be a set of $n$ points in $\mathbb{R}^{d}$, and let $\Pi$ be any optimal respectful partition of $\mathcal{S}$ into subsets ${\mathcal{S}_1,\ldots, \mathcal{S}_h}$ of sizes $n_1,\ldots, n_h$, respectively.
The algorithm \texttt{DPC-Maxima} computes the \textsc{Maxima} of $\mathcal{S}$ in time within $\bigo{n+ \sum_{k=1}^{h}{n_k \log^{d-2}\frac{n}{n_k}}}$.
\end{theorem}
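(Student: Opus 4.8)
The plan is to feed the survivor counts into the bound of \Cref{lemma:firstBound} and reduce the whole theorem to a single summation estimate. By \Cref{lemma:firstBound} the running time is within $\bigolr{\sum_{j=0}^{\log\log n}\sigma_j\,2^{j(d-2)}}$, where $\sigma_j$ is the number of points surviving iteration $j$; everything therefore reduces to bounding $\sigma_j$ against the optimal respectful partition $\Pi$.

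First I would bound the survivors of each part. In iteration $j$ the algorithm splits the current set (of size $\sigma_{j-1}\le n$) into $r_j=2^{2^j}$ $k$-$d$-tree cells. Let $\mathcal{S}_k$ be a non-singleton part, enclosed by a box $B_k$ whose interior lies below the \textsc{Maxima}. Any surviving $p\in\mathcal{S}_k$ must lie in a cell that is not fully contained in $B_k$: if its cell were contained in $B_k$, the maximal corner of that cell would be dominated (it is coordinatewise below the maximal corner of $B_k$, which is itself dominated), so the cell — and $p$ — would have been pruned. Thus \Cref{lemma:max_partition}, item~(\ref{item:pointsUnfiltered}), applied with $B=B_k$, bounds the survivors of $\mathcal{S}_k$ by $\bigo{\sigma_{j-1}/r_j^{1/d}}\le\bigo{n/r_j^{1/d}}$; together with the trivial bound $n_k$ (survivors of $\mathcal{S}_k$ form a subset of $\mathcal{S}_k$, and $1=n_k$ for singletons) this gives $\sigma_j\le\sum_{k=1}^{h}\min\{n_k,\bigo{n/r_j^{1/d}}\}$.

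Substituting into \Cref{lemma:firstBound} and swapping the order of summation, it remains to bound, for each $k$, the quantity $\sum_{j=0}^{\log\log n}2^{j(d-2)}\min\{n_k,\bigo{n/r_j^{1/d}}\}$. I would split at the threshold $j_k\approx\log_2\bigl(d\log_2(n/n_k)\bigr)$ — which stays within $[0,\log\log n]$ — where $r_j^{1/d}$ crosses $n/n_k$ and the minimum switches. For $j\le j_k$ the summand is $n_k\,2^{j(d-2)}$, a geometric series of ratio $2^{d-2}>1$ (this is where $d\ge3$ is used) whose top term is $\bigo{n_k\,(d\log_2(n/n_k))^{d-2}}=\bigo{n_k\log^{d-2}(n/n_k)}$. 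For $j>j_k$ the summand is $\bigo{n\,2^{j(d-2)}2^{-2^j/d}}$; because $2^j$ doubles at each step, the factor $2^{-2^j/d}$ shrinks consecutive terms by more than a constant, so the tail is dominated by its first term, which substituting $2^{j_k}\approx d\log_2(n/n_k)$ again evaluates to $\bigo{n_k\log^{d-2}(n/n_k)}$.

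Each inner sum is therefore within $\bigo{n_k\bigl(1+\log^{d-2}(n/n_k)\bigr)}$, and summing over $k$ gives $\bigo{\sum_k n_k+\sum_k n_k\log^{d-2}(n/n_k)}=\bigo{n+\sum_{k=1}^{h}n_k\log^{d-2}(n/n_k)}$, with the additive $n$ absorbing $\sum_k n_k=n$ (equivalently, the $j=0$ iteration). The main obstacle is this summation estimate: one has to see that both branches of the minimum collapse onto the same threshold term $j_k$ — the low branch by ordinary geometric decay, the high branch by the faster doubly-exponential decay of $r_j=2^{2^j}$ — and that they agree because $2^{j_k(d-2)}$ is exactly $\log^{d-2}(n/n_k)$. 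This also makes transparent why the doubly-exponential schedule for $r_j$ is the right choice, and why the bound degrades unless $d\ge3$.
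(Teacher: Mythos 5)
Your route is the paper's own: invoke \Cref{lemma:firstBound}, bound $\sigma_j \le \sum_{k=1}^{h}\min\{n_k,\bigo{n/r_j^{1/d}}\}$ by applying \Cref{lemma:max_partition}.(\ref{item:pointsUnfiltered}) to the boxes of the optimal respectful partition, exchange the two sums, and split the inner sum at the threshold $j_k\approx\log(d\log\frac{n}{n_k})$, bounding the low branch by its top geometric term. Your explicit argument for why every cell fully contained in $B_k$ gets pruned (its maximal corner is dominated) is a point the paper leaves implicit, and it is a welcome addition.

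The gap is in your tail estimate, and it sits exactly where the paper inserts its case analysis ``$n_k > n/2$''. When $n_k$ is close to $n$ (e.g.\ the box $B_1$ with $n-h+1$ points in \Cref{fig:worstcase}.b), the threshold $j_k=\log(d\log\frac{n}{n_k})$ is negative --- not in $[0,\log\log n]$ as you assert --- so the ``tail'' is the entire sum starting at $j=0$. There the ratio of consecutive terms, $2^{d-2}2^{-2^j/d}$, exceeds $1$ for every $j$ with $2^j<d(d-2)$, so the terms grow before they shrink and the tail is \emph{not} dominated by its first term. Worse, the bound you claim for it, $\bigo{n_k\log^{d-2}\frac{n}{n_k}}$, is false in this regime: take $n_k=n-1$ (so $h=2$); the tail still contains terms of size $\Theta(n)$ (those with $2^j\approx d$), while $n_k\log^{d-2}\frac{n}{n_k}=\Theta(n^{3-d})$, which vanishes for $d\ge4$. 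The correct statement is that the tail is within $\bigo{n_k}$: for $n_k>n/2$ one bounds it by $n\sum_{j\ge 0}2^{j(d-2)-2^j/d}=\bigo{n}\subseteq\bigo{n_k}$ (the series converges), and for $n_k\le n/2$ the paper's substitution $i=j-\log(d\log\frac{n}{n_k})-1$ turns it into $n_k\sum_{i\ge0}(n_k/n)^{2^i-1}=\bigo{n_k}$; it is this $\bigo{n_k}$, not your $\bigo{n_k\log^{d-2}\frac{n}{n_k}}$, that feeds the additive $\bigo{n}$ in the theorem. Your final inner-sum bound $\bigo{n_k(1+\log^{d-2}\frac{n}{n_k})}$ is true, but the argument given does not establish it; repairing it requires precisely the missing large-$n_k$ case (and, for $d\ge5$, the observation that the constantly many non-decreasing terms at the head of the tail are each within $\bigo{n_k}$, because $2^{-2^j/d}\le(n_k/n)^2$ holds there).
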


\begin{proof}
Let $\sigma_j$ be the size of $\mathcal{S}$ during iteration $j$.  From \Cref{lemma:firstBound} we have that the running time of \texttt{DPC-Maxima} is within $\bigolr{\sum_{j=0}^{\log\log n} {\sigma_j \times 2^{j(d-2)}}}$.
Now, consider a subset $\mathcal{S}_k$ in $\Pi$, and let $B_k$ be the minimum axis-aligned box enclosing $\mathcal{S}_k$. 
By \Cref{lemma:max_partition}.(\ref{item:pointsUnfiltered}) there are at most $\min \{n_k, \bigo{n/2^{2^j/d}}\}$ points in $\mathcal{S}_k$ that remain in $\mathcal{S}$ after iteration $j$ (remember that $r_j=2^{2^j}$ is used as the parameter in the lemma). By summing over all the subsets in the partition, and plugging the previous bound for the point remaining in each subset after the $j$-th iteration of the outer loop, we have that the running time of the algorithm is within:

\begin{align*}
\sum_{j=0}^{\log\log n} & {\bigo{\sigma_j \times 2^{j(d-2)}}} &\\ 
&\subseteq \sum_{j=0}^{\log\log n}{\sum_{k=1}^{h}{\min \{n_k, \bigo{n/2^{2^j/d}}\} \times 2^{j(d-2)}}}\\
	&= \sum_{k=1}^{h}{\sum_{j=0}^{\log\log n}{\min \{n_k, \bigo{n/2^{2^j/d}}\} \times 2^{j(d-2)}}} %
	\;\;\;\;\;\;\;\;\;\;\;%
	\text{(reordering terms)}\\
	&\in \sum_{k=1}^{h}{\bigolr{\sum_{j=0}^{\log(d \log{\frac{n}{n_k}})}{n_k\times 2^{j(d-2)}} 
		+ \sum_{j =1+\log(d\log{\frac{n}{n_k}})}^{\log\log n}{\frac{n}{2^{\frac{1}{d} {2^{j}}}}\times 2^{j(d-2)}}}} \\
\end{align*}
The left inner summation can be bounded by directly by $\bigo{n_k\log^{d-2}\frac{n}{n_k}}$ \footnote{Use the identity $\sum_{i=0}^{m}{2^{c \times i}}= \frac{2^{c+cm}-1}{2^c-1} \in \bigo{2^{cm}} \subseteq \bigo{\log^c m}$, for any constant $c$.}. To show that the right one  is within $\bigo{n_k}$, start by bounding $2^{j(d-2)}$ by  $2^{\frac{2^{j-1}}{d} }$, and take the infinite sum:
\begin{align*}
	 \sum_{j =1+\log(d\log{\frac{n}{n_k}})}^{\log\log n}  {\frac{n}{2^{\frac{2^{j}}{d}}}\times 2^{j(d-2)}}  
 	\le \sum_{j =1+\log(d\log{\frac{n}{n_k}})}^{\infty}{\frac{n}{2^{\frac{2^{j-1}}{d}}}} 
\end{align*}

This settles the case when $n_k > n/2$: when $n$ is factored out, the remaining series converges, and the total summation is within $\bigo{n} \subseteq \bigo{n_k}$. For $n_k \le n/2$, we substitute the variable of the sum by $i=j - \log(d \log \frac{n}{n_k}) - 1$ to obtain:
\begin{align*}
\sum_{j =1+\log(d\log{\frac{n}{n_k}})}^{\infty}&{\frac{n}{2^{\frac{2^{j-1}}{d}}}}  
= \sum_{i =0}^{\infty}{\frac{n}{2^{2^i \log n/n_k}}} 
= \sum_{i =0}^{\infty}{\frac{n}{{n/n_k}^{2^i}}} \\
	& = n_k \sum_{i=0}^{\infty} {\frac{1}{(n/n_k)^{2^i - 1}}} \le n_k \sum_{i=0}^{\infty} {\frac{1}{(2)^{2^i - 1}}} & \text{(because $n/n_k \ge 2 $)} \\
	& \subseteq \bigo{n_k} &\text{(the serie converges)}
\end{align*}
Finally, replacing the bounds for the two inner summations yields the bound $\bigolr{\sum_{k=1}^{h}{n_k(1+ \log^{d-2}\frac{n}{n_k})}} \subseteq \bigolr{n+ \sum_{k=1}^{h}{n_k \log^{d-2}\frac{n}{n_k}}}$.
\qed
\end{proof}

We prove next that, thanks to the concavity of the \emph{polylog} function, the bound in \Cref{theo:maximaHd_runtime} is never asymptotically worse than $\bigo{n \log^{d-2} h}$, and hence the running time of \texttt{DPC-Maxima} is never worse than the running time of \citeauthor{Kirkpatrick1985}'s algorithm~\cite{Kirkpatrick1985}:

\begin{lemma}
Let $\mathcal{S}$ be a set of $n$ points in $\mathbb{R}^{d}$, of \textsc{Maxima} of size $h$, and let $\Pi$ be any optimal respectful partition of $\mathcal{S}$ into subsets ${\mathcal{S}_1,\ldots, \mathcal{S}_h}$ of sizes $n_1,\ldots, n_h$, respectively. Then,

\begin{displaymath}
 \bigolr{\sum_{k=1}^{h}{n_k \log^{d-2}\frac{n}{n_k}}} \subseteq \bigolr{n \log^{d-2} h}
 \end{displaymath}
\end{lemma}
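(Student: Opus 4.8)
The plan is to normalize the part sizes to a probability vector and recognize the left-hand side as a constant times an expectation, to which Jensen's inequality can be applied. Writing $p_k = n_k/n$, so that $\sum_{k=1}^{h} p_k = 1$, the statement is equivalent to $\sum_{k=1}^{h} p_k \log^{d-2}(1/p_k) \in \bigo{\log^{d-2} h}$. First I would read the left side as $E[\log^{d-2}(1/p_K)]$, where $K$ is the random index with $\Pr[K=k]=p_k$. The computation that makes everything click is the identity $E[1/p_K] = \sum_{k=1}^{h} p_k\cdot(1/p_k) = \sum_{k=1}^{h} 1 = h$: the expected value of $1/p_K$ is exactly the number of parts. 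Equivalently, in the un-normalized form, $\sum_{k=1}^{h} n_k \log^{d-2}(n/n_k) = n\,E[g(n/n_K)]$ and $E[n/n_K]=h$, where $g(z)=\log^{d-2} z$.

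With this identity in hand, the natural move is to apply Jensen's inequality to the concave function $g$, which would give $E[g(n/n_K)] \le g(E[n/n_K]) = g(h) = \log^{d-2} h$, and finishing after multiplying back by $n$. So I would reduce the whole statement to establishing $E[g(n/n_K)]\le \bigo{g(h)}$ with $E[n/n_K]=h$ already pinned down.

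The main obstacle is that $z\mapsto \log^{d-2} z$ is \emph{not} concave on the whole range $[1,\infty)$ of the variable $n/n_K$: a second-derivative computation shows that for the natural logarithm it is convex on the bounded initial interval $z\in[1,e^{d-3}]$ and concave only for $z\ge e^{d-3}$, the threshold being a constant that depends only on $d$ (and only by a constant factor on the choice of base). Hence Jensen does not apply verbatim. I would repair this by replacing $g$ with its least concave majorant $\tilde g$ on $[1,\infty)$: since the non-concavity is confined to a bounded region, $\tilde g$ coincides with $\log^{d-2} z$ for $z$ beyond a constant $z_0$ and equals a single tangent line through $(1,0)$ for $z\le z_0$, so that $\tilde g(z)\ge \log^{d-2} z$ everywhere. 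Applying Jensen to the concave $\tilde g$ then yields $E[g(n/n_K)] \le E[\tilde g(n/n_K)] \le \tilde g(E[n/n_K]) = \tilde g(h)$, and since $\tilde g(h)\in\bigo{\log^{d-2} h}$ for every $h\ge 2$ (for $2\le h\le z_0$ both quantities are positive constants, and $\tilde g(h)=\log^{d-2} h$ once $h\ge z_0$), the claimed bound follows. The degenerate case $h=1$ is immediate, since then $n_1=n$ and both sides vanish. The only quantitative care needed is to verify that $z_0$ and the slope of the tangent line depend only on $d$, so the hidden constant in $\bigo{\cdot}$ is independent of $n$ and $h$.
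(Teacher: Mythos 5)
Your proof is correct, and its core is the same as the paper's: the paper invokes ``Gibbs' inequality'' for concave $\varphi$ with $a_k = n_k/n$ and $b_k = 1$, which is exactly Jensen's inequality applied to the distribution $p_k = n_k/n$ at the values $n/n_k$, i.e., your formulation $E[g(n/n_K)] \le g(E[n/n_K])$ with the identity $E[n/n_K] = h$. Where you genuinely go beyond the paper is in the treatment of concavity, and this is not a pedantic point: the paper asserts that $f(x)=\log^{d-2}x$ is concave and applies the inequality on all of $[1,\infty)$, yet its own footnote only establishes $f''(x)\le 0$ for $x \ge 2^{d-1}$, while the arguments $n/n_k$ can lie arbitrarily close to $1$ --- precisely in the regime the paper cares most about, namely low-entropy instances where one part $\mathcal{S}_k$ contains almost all of $\mathcal{S}$. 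So the paper's proof, as written, has a gap exactly where your least-concave-majorant construction does its work: replacing $g$ by $\tilde g$ (a tangent line through $(1,0)$ up to a threshold $z_0$ depending only on $d$, then $g$ itself), applying Jensen to the genuinely concave $\tilde g$, and absorbing $\tilde g(h) \in \bigo{\log^{d-2}h}$ into the constant is the right repair, and the asymptotic statement tolerates the constant-factor loss. In short, you prove the same inequality by the same mechanism, but yours is the complete version of the argument; the one detail worth writing out explicitly is the existence of the tangency point $z_0$ (secant slope from $(1,0)$ exceeds $g'$ for large $z$ and is smaller near $z=1$, so continuity gives $z_0$), which you correctly flag as depending only on $d$.
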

\begin{proof}
For any concave function $\varphi$, and $a = \sum_k a_k$, $b=\sum_k b_k$, Gibbs' inequality states that
\begin{displaymath}\label{eq:gibbs}
\sum_i{a_k \varphi \left(\frac{b_k}{a_k} \right) \le a \varphi \left(\frac{b}{a} \right)}.
\end{displaymath}
Since $f(x)=\log^{d-2} x$ is a concave function
\footnote{$f(x)''= d (d-1-\log x)(log^{d-2}x)/x^2 \le 0$ for all $x \ge 2^{d-1}$}, 
choosing $a_k = \frac{n_k}{n}$, $b_k=1$ for $k=[1..h]$ yields:

\begin{displaymath}
\sum_{k=1}^{h}{\frac{n_k}{n} \log^{d-2} \left(\frac{n}{n_k} \right) \le \log^{d-2} h }.
\end{displaymath}
Multiplying both sides of the inequality by $n$ yields the result.
\qed
\end{proof}

The upper bound in \Cref{theo:maximaHd_runtime} properly generalizes \citeauthor{Afshani2009}'s result~\cite{Afshani2009} for dimensions 2 and 3 ($d \in [2,3]$), but the lower bound proving is more tricky to prove: we discuss in the next section.

\section{Discussion}\label{sec:discussion}
The bound of $\bigo{n+ \sum_{k=1}^{h}{n_k \log^{d-2}\frac{n}{n_k}}}$ in \Cref{theo:maximaHd_runtime}  is within $\Theta(n \log^{d-2} h)$ for instances where all the subsets in an optimal respectful partition have equal size $n/h$ (as illustrated in \Cref{fig:worstcase}.a).
However, for instances with optimal unbalanced partitions (and hence with low structural entropy), the improvement in running time of $\texttt{DPC-Maxima}$  over \citeauthor{Kirkpatrick1985}'s algorithm~\cite{Kirkpatrick1985} can be significant.
Consider, for example, an instance where all the points not in the \textsc{Maxima} are dominated by a same unique point (as the one illustrated in \Cref{fig:worstcase}.b), and let $h \in \bigo{n^{1-\varepsilon}}$ be the size of the \textsc{Maxima}: \citeauthor{Kirkpatrick1985}'s algorithm~\cite{Kirkpatrick1985} for this instance runs in time within $\bigo{n \log^{d-2} n}$, while $\texttt{DPC-Maxima}$ runs in time within $\bigo{n}$, linear in the input size.

\begin{figure}
	\begin{center}
		\includegraphics{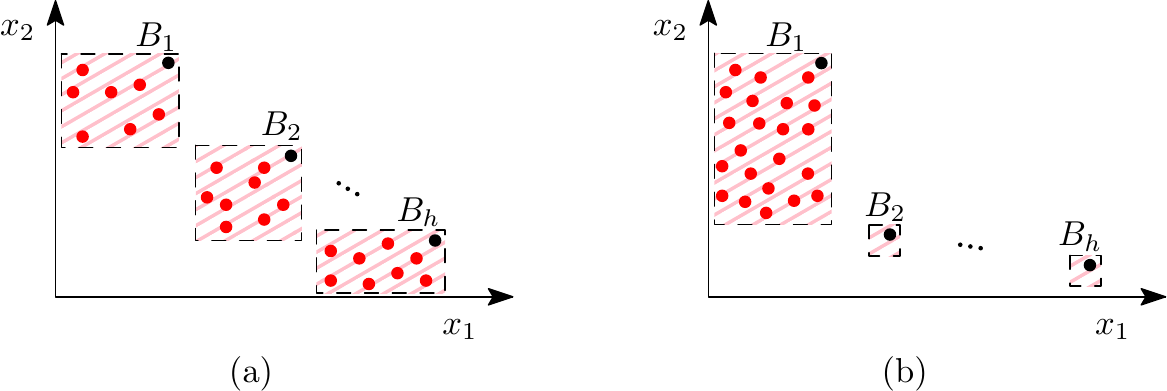}
	\end{center}     
	
	\caption{Two sets of $n$ points and \textsc{Maxima} of size $h$ with worst-case (a) and best-case (b) structural entropies, respectively. In both cases $\Pi = \{B_1,B_2,\ldots,B_h\}$ is an optimal respectful partition, in (a) all the boxes contain $n/h$ points, while in (b) $B_1$ contains $(n-h+1)$ points, and $B_2, \ldots, B_h$ are singletons.}
	\label{fig:worstcase}
\end{figure}

For the problem of \textsc{Maxima} in dimension $d \ge 3$, the best computational complexity lower bound known so far is $\Omega(n(\mathcal{H}(\mathcal{S})+1)) =  \Omega(n+ \sum_{k=1}^{h}{n_k \log \frac{n}{n_k}})$, obtained by extending the bound given in 2009 by \textcite{Afshani2009} for 2 dimensions. 
For dimension $d \ge 4$ there is a gap between this lower bound and the running time of the algorithm \texttt{DPC-Maxima} (which increases with $d$).
The generalization of the order-oblivious instance-optimality result for the $d$-dimensional case for $d \ge 4$ is still open. In general, there is a lack of advanced lower bound techniques for problems handling high dimensional data.
In this sense, a theory of fine classes of problems (as partially done in the field of Parameterized Complexity~\cite{DowneyF99}) could help to palliate this lack of techniques for lower bounds.

A closely related problem is \textsc{Dominance Reporting}, where for a set $\mathcal{S}$ of $n$ points in $d$-dimensional space, and a set of query points $Q$, one must report the points $q \in Q$ for which there is a point in $\mathcal{S}$ that dominates $q$.
For the offline version of the problem, where the size of $Q$ is known to be significantly bigger than the size  of $\mathcal{S}$, the fastest data structure known to date, in dimension $d \ge 3$, was presented in 2012 by \textcite{Afshani2012}. 
\textcite{KarpMR88} considered an online version, where the size of $Q$ is not known in advance, for 
which they describe a deferred data structure sensitive to the sizes of $\mathcal{S}$  and $Q$.
One natural question is whether the results by \textcite{KarpMR88} for online \textsc{Dominance Reporting} 
can be improved via a deferred data structure sensitive to the structural entropy of the input set to query.

\printbibliography

\end{document}